\documentclass[12pt,a4paper]{article}
\usepackage[dvipdfmx,hiresbb]{graphicx}
\usepackage[dvipdfmx]{color}

\usepackage{color}
\usepackage{bm,enumerate,amsmath,amssymb,amsthm}
\usepackage{epsfig}
\usepackage{cite}
\usepackage{algorithm}
\usepackage{algpseudocode}
\usepackage{txfonts}
\usepackage{subfigmat}
\usepackage{figsize}
\usepackage{here}
\usepackage{listings}
\usepackage{braket}
\usepackage{comment}
\usepackage{lscape}
\usepackage{bigints}

\usepackage{arxiv}
\usepackage[utf8]{inputenc} 

\theoremstyle{definition}
\newtheorem{theorem}{Theorem}
\newtheorem*{theorem*}{Theorem}
\newtheorem{definition}{Definition}
\newtheorem*{definition*}{Definition}

\newtheorem{lemma}{Lemma}

\newtheorem{remark}{Remark}

\renewcommand{\headeright}{}

\title{Rademacher Complexity Bounds for Parameterized Quantum Circuits Generated by Pauli Strings}

\author{
  Hiroshi Ohno\\
  Toyota Central R \& D Labs., Inc.\\
  Aichi, Japan\\
  \texttt{oono-h@mosk.tytlabs.co.jp}\\
}

\date{\empty}

\begin{document}

\maketitle

\begin{abstract}
  In this study, we analyze the Rademacher complexity $ \mathcal{R}_{M} $ of a parameterized unitary whose generators are chosen from $ n $-qubit Pauli strings.
  Although generalization bounds for quantum machine learning models have been studied in several settings, explicit Rademacher-complexity bounds for parameterized unitaries generated by Pauli strings remain less transparent.
  We derive simple scaling bounds in terms of the number of parameters $ L $ and the number of training samples $ M $: $ \mathcal{O}(\frac{L^{\frac{3}{2}}}{\sqrt{M}}) $ for the full parameter domain and $ \mathcal{O}(\frac{L}{\sqrt{M}}) $ for a restricted parameter domain.
  Furthermore, we compare the obtained results with those for a classical linear model class and suggest a potential statistical-complexity advantage when the norms of both the input and the parameter in the classical model scale with the number of parameters.
  Numerical experiments provide qualitative evidence consistent with the predicted scaling.
\end{abstract}

\section{Introduction}
Quantum machine learning (QML) is an active and challenging research area \cite{schuld2015,biamonte2017,benedetti2019}.
QML models have empirically demonstrated a quantum advantage over classical models \cite{huang2021,liu2021}.
Furthermore, generalization performance and trainability of a parameterized quantum circuit (unitary) generated by a set of generators appear to be theoretically characterizable using Lie algebras \cite{cerezo2021,ragone2024}.

We address the generalization properties of parameterized quantum circuits (unitaries) generated by Pauli strings as QML models, and present an upper bound on the generalization gap using Rademacher complexity.
In addition, regarding generalization, the theoretical possibility of a quantum advantage over classical models appears not to have been fully clarified, which motivates us to investigate this question.

Previous studies \cite{huang2021,caro2022,bu2022} have demonstrated the generalization performance of parameterized quantum circuits in terms of the number of trainable gates, circuit resources, and the number of training samples for QML models.
In this study, we restrict our attention to QML models generated by $ n $-qubit Pauli strings, where the observable used in their expectation-value functions is also a Pauli string.

Our contributions are summarized as follows:
\begin{itemize}
\item The Lipschitz constant is $ \mathcal{O}(\sqrt{L}) $, where $ L $ is the number of parameters in the quantum circuits.
\item Using the covering number and Dudley's entropy integral, we bound the empirical Rademacher complexity by $ \mathcal{O}(\frac{L^{\frac{3}{2}}}{\sqrt{M}}) $, where $ M $ is the number of samples.
\item In a restricted parameter space, the complexity improves to $ \mathcal{O}(\frac{L}{\sqrt{M}}) $.
\item In the numerical experiments, we present a random-search algorithm that gives a lower approximation for the empirical Rademacher complexity, and empirically investigate the dependence on $ M $ and $ L $ using this approximation algorithm.
\end{itemize}

The remainder of this paper is organized as follows.
In Section \ref{sec2}, we describe the relationship to relevant studies.
Section \ref{sec3} presents a useful lemma and derives the empirical Rademacher complexity of parameterized unitaries.
Section \ref{sec4} presents a random-search algorithm for the empirical Rademacher complexity and reports numerical results.
Section \ref{sec5} summarizes this study and discusses future work.

\section{Related work}\label{sec2}
Parameterized quantum circuits constitute a standard class of models, and it is important to study their expressibility, trainability, and generalization performance \cite{benedetti2019}.

Caro et al. \cite{caro2022} showed that the generalization error for QML is upper-bounded in terms of the number of trainable gates $ T $ and the number of training data points (samples) $ N $, and scales as $ \sqrt{\frac{T}{N}} $ in the worst case.
In addition, when $ K \ll T $ gates undergo substantial changes during the optimization process, the bound improves to $ \sqrt{\frac{K}{N}} $.
In contrast, the present study explicitly analyzes the Rademacher complexity for parameterized unitaries generated by Pauli strings, rather than using only the number of gates.

In \cite{bu2022}, the authors applied Rademacher complexity to quantum circuit models and investigated the dependence of the statistical complexity on the resources, depth, width, and the numbers of input and output registers of a quantum circuit.
Their paper evaluated the statistical complexity of quantum circuits with respect to circuit resources.
In our study, we focus on the expectation function class of parameterized quantum circuits restricted to $ n $-qubit Pauli strings and derive an upper bound on the Rademacher complexity with respect to the number of parameters $ L $ and the number of samples $ M $, based on the Lipschitz constant and covering number.

Huang et al. \cite{huang2021} developed a methodology for assessing potential quantum advantage in learning tasks.
They proposed a projected quantum model that provides a quantum speed-up for a learning problem and demonstrated a significant prediction advantage over some classical models on engineered data sets designed to exhibit a maximal quantum advantage.
This line of work demonstrates that quantum advantage in QML should be rigorously assessed using statistical learning theory.

Gil-Fuster et al. \cite{gilfuster2024} showed through numerical experiments that QML models can fit random labels or random training states, indicating that there are limitations to explaining generalization performance based only on the model class.
Our study gives a theoretical generalization-gap bound based only on model-class complexity; therefore, we leave a detailed understanding of practical generalization behavior, including the effects of data distributions, parameter initialization, noise in the data, and related factors, to future work.

\section{Method}\label{sec3}
The complexity of the training model affects its generalization performance.
In general, a lower-complexity model tends to generalize better.

The empirical Rademacher complexity is a well-known complexity measure in machine learning \cite{mohri2018}.
We present its definition as follows:
\begin{definition} [Empirical Rademacher complexity \cite{mohri2018}]\label{def1}
  \begin{equation}
    \mathcal{R}_{M} (\mathcal{F}) \coloneqq \mathbb{E}_{\sigma} \left[ \sup_{\theta \in \Theta} \frac{1}{M} \sum_{j=1}^{M} \sigma_{j} f(x_{j} \, ; \theta) \right],
  \end{equation}
  where the function class $ \mathcal{F} \coloneqq \{ f_{\theta} : x \rightarrow f(x \, ; \theta) \, | \, \theta \in \Theta \} $, $ x_{j} $ denotes an input data point, $ \Theta \subseteq \mathbb{R}^{d} $ denotes the parameter set for $ f $, $ d $ is the number of parameters of $ f $, $ M $ denotes the number of samples, and the $ \sigma_{j} \in \{-1, 1\} $ are independent, uniformly distributed random variables.
\end{definition}

To derive the empirical Rademacher complexity of a parameterized unitary, we use the following useful lemma.
\begin{lemma}\label{lem1}
  Let the parameterized unitary $ U $ be $ \prod_{j=1}^{L} \exp{ (-i \theta_{j} G_{j}) }$, where each generator $ G_{j} $ and the observable $ O $ are $ n $-qubit Pauli strings, and $ L $ is the number of parameters.
  For the expectation value $ f(x \, ; \theta) = \braket{\psi(x) | U^{\dagger} O U | \psi(x)} $ where $ \ket{\psi(x)} $ denotes an input state, a Lipschitz constant $ K $ for $ f $ can be chosen as
  \begin{equation*}
    K \leq 2 \sqrt{L}.
  \end{equation*}
\end{lemma}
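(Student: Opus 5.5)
The plan is to bound the gradient of $f$ with respect to $\theta$ uniformly and then invoke the mean value theorem. Since $f(x\,;\cdot)$ is smooth on $\mathbb{R}^{L}$, for any $\theta,\theta'$ we have
\[
|f(x\,;\theta)-f(x\,;\theta')| \le \sup_{\xi}\|\nabla_\theta f(x\,;\xi)\|_2\,\|\theta-\theta'\|_2 .
\]
It therefore suffices to show $|\partial f/\partial\theta_k|\le 2$ for every $k$ and every $x$. Summing the squares of $L$ such terms then gives $\|\nabla_\theta f\|_2\le 2\sqrt{L}$, so $K=2\sqrt{L}$ works.

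To compute the partial derivative, split the product as $U = A_k\, e^{-i\theta_k G_k}\, B_k$, where $A_k=\prod_{j<k}e^{-i\theta_jG_j}$ and $B_k=\prod_{j>k}e^{-i\theta_jG_j}$ (the ordering convention does not matter). Then
\[
\partial_{\theta_k}U = -i\,A_k G_k e^{-i\theta_kG_k}B_k = -i\,W_k U,
\qquad W_k := A_kG_kA_k^\dagger .
\]
$W_k$ is Hermitian and unitarily equivalent to $G_k$, so $\|W_k\|_{\mathrm{op}}=1$. Differentiating $f=\langle\psi|U^\dagger O U|\psi\rangle$ gives
\[
\partial_{\theta_k} f = i\,\langle\psi|U^\dagger [W_k,O]\,U|\psi\rangle .
\]

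Set $|\phi\rangle=U|\psi(x)\rangle$, which is a unit vector. Then
\[
|\partial_{\theta_k}f| \le \|[W_k,O]\|_{\mathrm{op}} \le 2\|W_k\|_{\mathrm{op}}\|O\|_{\mathrm{op}} = 2,
\]
because Pauli strings have operator norm $1$. An alternative is the parameter-shift rule: since $G_k^2=I$, $f$ is a trigonometric polynomial $a\cos 2\theta_k + b\sin 2\theta_k + c$ in $\theta_k$, and $|\partial_{\theta_k} f| = |f(\theta_k+\pi/4)-f(\theta_k-\pi/4)| \le 2$, using $|f|\le 1$. Either route gives the bound coordinatewise.

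The only step needing care is writing $\partial_{\theta_k}U$ correctly through the conjugated generator $W_k$, so that the norm bound is not spoiled by the surrounding gates. Once that is done, the mean value theorem step is routine. Because $\mathbb{R}^L$ is convex, no domain issues arise, and the bound $K\le 2\sqrt{L}$ holds uniformly in $x$.
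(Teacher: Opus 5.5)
Your proof is correct and takes essentially the same route as the paper. The paper likewise writes $\partial_{\theta_j} f = i\langle\psi|U^\dagger[\widetilde G_j(\theta),O]U|\psi\rangle$ with an effective generator unitarily equivalent to $G_j$, bounds each partial derivative by $2$, and sums the squares to get $\|\nabla_\theta f\|_2\le 2\sqrt{L}$. You additionally spell out the conjugated generator $W_k=A_kG_kA_k^\dagger$ and the mean-value step that the paper leaves implicit, and your parameter-shift alternative is also valid.
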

\begin{proof}
  For each $ j $, differentiating $ U(\theta) $ with respect to $ \theta_{j} $ introduces an effective Hermitian generator $ \widetilde G_{j}(\theta) $, which is unitarily equivalent to $ G_{j} $.
  Hence $ \| \widetilde G_{j}(\theta) \|_{\rm op} = \| G_{j} \|_{\rm op} = 1 $.
  Then,
  \begin{equation}
    \frac{\partial f}{\partial \theta_{j}} = i \braket{ \psi(x)|U^\dagger [\widetilde G_{j}(\theta),O] U | \psi(x)}.
  \end{equation}
  Therefore,
  \begin{equation}
    \left| \frac{\partial f}{\partial \theta_{j}} \right| \leq \| [\widetilde G_{j}(\theta), O] \|_{\rm op} \leq 2 \| \widetilde G_{j}(\theta) \|_{\rm op} \| O \|_{\rm op} \leq 2.
  \end{equation}
  Thus,
  \begin{equation}
    \| \nabla_\theta f(x \, ; \theta) \|_{2} = \left( \sum_{j=1}^{L} \left | \frac{\partial f}{\partial \theta_{j}} \right|^{2} \right)^{\frac{1}{2}} \leq 2 \sqrt{L}.
  \end{equation}
  Hence, since $ K \coloneqq \sup_{\theta \in \Theta} \| \nabla_{\theta} f(x \, ; \theta) \|_{2} $, $ K \leq 2 \sqrt{L} $.
\end{proof}
Therefore, $ K = 2 \sqrt{L} $ is a valid choice.

The lemma yields the following theorem.
\begin{theorem}\label{th1}
  Assume that each generator $ G_{j} $ and the observable $ O $ are $ n $-qubit Pauli strings, and that $ \theta \in [0, 2\pi]^{L} $.
  For the expectation $ f(x \, ; \theta) = \braket{\psi(x) | U^{\dagger} O U | \psi(x)} $ of the parameterized unitary $ U(\theta) $, the empirical Rademacher complexity is bounded by
  \begin{equation}
    \mathcal{R}_{M} (\mathcal{F}) = \mathcal{O}\left( \frac{L^{\frac{3}{2}}}{\sqrt{M}} \right).
  \end{equation}
\end{theorem}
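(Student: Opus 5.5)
We combine Lemma~\ref{lem1} with a covering-number bound on the parameter cube and then apply Dudley's entropy integral. Throughout, $\mathcal{F}$ is viewed as a subset of $\mathbb{R}^{M}$ through the vector $(f(x_1;\theta),\dots,f(x_M;\theta))$, equipped with the empirical metric
\begin{equation*}
  d_M(f_\theta,f_{\theta'}) = \Bigl(\frac{1}{M}\sum_{j=1}^{M}|f(x_j;\theta)-f(x_j;\theta')|^{2}\Bigr)^{1/2}.
\end{equation*}

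\textbf{Step 1 (transfer to parameter space).} By Lemma~\ref{lem1}, each map $\theta\mapsto f(x_j;\theta)$ is $K$-Lipschitz with respect to $\|\cdot\|_2$, where $K\le 2\sqrt{L}$. The bound holds uniformly in $x_j$, since the derivative estimate did not depend on the input state. Averaging over the samples gives $d_M(f_\theta,f_{\theta'})\le K\|\theta-\theta'\|_2$. Consequently, any $\epsilon/K$-cover of $\Theta=[0,2\pi]^L$ in $\ell_2$ induces an $\epsilon$-cover of $\mathcal{F}$ in $d_M$, so
\begin{equation*}
  N(\mathcal{F},d_M,\epsilon)\le N\bigl([0,2\pi]^L,\|\cdot\|_2,\epsilon/K\bigr).
\end{equation*}

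\textbf{Step 2 (covering the cube).} Cover the cube with a uniform grid of mesh $\delta$ per coordinate. The $\ell_2$ covering radius of this grid is at most $\delta\sqrt{L}/2$. Choosing $\delta = 2\epsilon/(K\sqrt{L})$ then yields
\begin{equation*}
  N\le \Bigl\lceil \frac{\pi K\sqrt{L}}{\epsilon}\Bigr\rceil^{L}\le\Bigl(1+\frac{2\pi L}{\epsilon}\Bigr)^{L}.
\end{equation*}
Hence $\log N(\mathcal{F},d_M,\epsilon)\le L\log(1+2\pi L/\epsilon)$. Note also that $|f|\le\|O\|_{\rm op}=1$, so the diameter of $\mathcal{F}$ under $d_M$ is at most $2$. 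It therefore suffices to integrate over $\epsilon\in(0,2]$.

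\textbf{Step 3 (Dudley).} Dudley's entropy integral gives
\begin{equation*}
  \mathcal{R}_M(\mathcal{F})\le\frac{12}{\sqrt{M}}\int_0^{2}\sqrt{\log N(\mathcal{F},d_M,\epsilon)}\,d\epsilon\le\frac{12\sqrt{L}}{\sqrt{M}}\int_0^2\sqrt{\log(1+2\pi L/\epsilon)}\,d\epsilon.
\end{equation*}

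\textbf{Step 4 (the main obstacle: evaluating the integral).} The integral is only of order $\sqrt{\log L}$, so a careful evaluation would give roughly $\sqrt{L\log L/M}$, which is stronger than the claimed rate. To reach the statement with a clean power of $L$, I would bound crudely. Using $\log(1+u)\le u$, the integrand is at most $\sqrt{2\pi L/\epsilon}$. Since $\int_0^2\epsilon^{-1/2}\,d\epsilon=2\sqrt{2}$ is finite, this gives
\begin{equation*}
  \mathcal{R}_M(\mathcal{F})\le \frac{C\sqrt{L}\cdot\sqrt{L}}{\sqrt{M}}=\mathcal{O}\Bigl(\frac{L}{\sqrt{M}}\Bigr),
\end{equation*}
which is in particular $\mathcal{O}(L^{3/2}/\sqrt{M})$ because $L\ge1$. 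The step needing the most care is whether to keep track of the $K$ and $\sqrt{L}$ factors entering through the grid radius. Bounding $\log N\le L\cdot(2\pi L/\epsilon)$ loosely and allowing a cruder constant in the integration both still land within $L^{3/2}/\sqrt{M}$, so the stated bound holds with room to spare.
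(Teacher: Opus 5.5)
Your argument is correct and proves more than the theorem, but it departs from the paper's proof at the step that sets the rate.

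\textbf{What is shared.} Like you, the paper transfers covers through the Lipschitz constant $K\le 2\sqrt{L}$. It uses the volumetric bound $N(\epsilon,\Theta)\le(3R/\epsilon)^{L}$ with $R=2\pi\sqrt{L}$ instead of your grid, which makes no real difference.

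\textbf{Where the paper differs.} It takes the upper limit of Dudley's integral to be ${\rm diam}(\mathcal{F},d_S)\le KR=4\pi L$. After the substitution $\epsilon=KR\,t$, the entropy integral is linear in $KR$. This gives $\sqrt{L}\cdot KR\sim L^{3/2}$, with constant $72\pi^{3/2}$. The paper uses $|f|\le 1$ only afterwards, to note that $\mathcal{R}_M\le\min\{1,CL^{3/2}/\sqrt{M}\}$.

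\textbf{Where you differ.} You feed $|f|\le\|O\|_{\rm op}=1$ into the entropy integral itself, capping the range of integration at $2$. As a result, $K$ and the domain size enter only through the covering number, i.e.\ inside the logarithm. Even your crude estimate $\log(1+u)\le u$ then gives $48\sqrt{\pi}\,L/\sqrt{M}$. The careful evaluation you mention gives $\frac{12\sqrt{L}}{\sqrt{M}}\bigl(2\sqrt{\log(2+2\pi L)}+\frac{\sqrt{\pi}}{2}\bigr)=\mathcal{O}(\sqrt{L\log L/M})$.

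\textbf{What each route buys.} The paper's route yields a bound that scales linearly with the parameter-domain diameter $R$. That linear dependence is what drives its remark that restricting to $R=\mathcal{O}(1)$ improves the rate to $L/\sqrt{M}$. Your route shows that this $R$-dependence, and the $L^{3/2}$ exponent itself, come from the loose diameter bound rather than from the model class. Under your argument, shrinking the domain only changes the argument of a logarithm.
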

\begin{proof}
  Using the covering number $ N $, the Rademacher complexity $ \mathcal{R}_{M} (\mathcal{F}) $ is bounded by
  \begin{equation}
    \mathcal{R}_{M} (\mathcal{F}) \leq 4 \alpha + \frac{12}{\sqrt{M}} \int_{\alpha}^{{\rm diam}(\mathcal{F}, d_{S})} \sqrt{ \log N(\epsilon, \mathcal{F}, d_{S}) } \, d\epsilon,
  \end{equation}
  where $ d_{S} \coloneqq \left( \frac{1}{M} \sum_{j = 1}^{M} (f(x_{j} \, ; \, \theta) - f(x_{j} \, ; \, \theta^{\prime}))^{2} \right)^{\frac{1}{2}} $, $ \epsilon > 0 $, and $ \alpha > 0 $ \cite{caro2022,shalev2014}.
  Using a Lipschitz constant $ K $, $ d_{S} \leq K \| \theta - \theta^{\prime} \|_{2} $.
  Then, $ N(\epsilon, \mathcal{F}, d_{S}) \leq N(\frac{\epsilon}{K}, \Theta, \|\cdot\|) $.
  Thus, we obtain
  \begin{equation}
    \mathcal{R}_{M} (\mathcal{F}) \leq 4 \alpha + \frac{12}{\sqrt{M}} \int_{\alpha}^{{\rm diam}(\mathcal{F}, d_{S})} \sqrt{ \log N(\frac{\epsilon}{K}, \Theta, \|\cdot\|) } \, d\epsilon.
  \end{equation}
  Since $ \theta \in [0, 2 \pi]^{L} $, $ {\rm diam}(\mathcal{F}, d_{S}) \leq K {\rm diam}(\Theta) = K \cdot 2 \pi \sqrt{L} = K R $.
  Here, let $ R \coloneqq 2 \pi \sqrt{L} $ denote a diameter of $ [0, 2 \pi]^{L} $ (the parameter space).
  Therefore, $ N(\epsilon, \Theta, \|\cdot\|) \leq \left( \frac{3R}{\epsilon} \right)^{L} $ ($ 0 \leq \epsilon \leq R $).
  Then,
  \begin{equation}
    \mathcal{R}_{M} (\mathcal{F}) \leq 4 \alpha + \frac{12 \sqrt{L}}{\sqrt{M}} \int_{\alpha}^{KR} \sqrt{ \log \frac{3 KR}{\epsilon} } \, d\epsilon.
  \end{equation}
  Using $ \int_{\alpha}^{KR} \sqrt{ \log \frac{3 KR}{\epsilon} } \, d\epsilon \leq \frac{3 \sqrt{\pi}}{2} KR $, we obtain
  \begin{equation}
    \mathcal{R}_{M} (\mathcal{F}) \leq 4 \alpha + 18 \sqrt{\pi}\frac{\sqrt{L}}{\sqrt{M}} KR.
  \end{equation}
  By Lemma \ref{lem1}, we may take $ K = 2 \sqrt{L} $,
  \begin{equation}
    \begin{split}
      \mathcal{R}_{M} (\mathcal{F}) &\leq 4 \alpha + 36 \sqrt{\pi}\frac{LR}{\sqrt{M}}\\
      &= 4 \alpha + 72 \pi^{\frac{3}{2}} \frac{L^{\frac{3}{2}}}{\sqrt{M}}.
    \end{split}
  \end{equation}
  Since the above bound holds for every $ \alpha > 0 $, taking $ \alpha \downarrow 0 $, we obtain
  \begin{equation}
    \mathcal{R}_{M} (\mathcal{F}) \lesssim 72 \pi^{\frac{3}{2}} \frac{L^{\frac{3}{2}}}{\sqrt{M}}.
  \end{equation}
\end{proof}
We note that since $ f(x \, ; \theta) $ when the observable is a Pauli string is bounded in $ [-1, 1] $, the empirical Rademacher complexity is bounded above by one.
Therefore, $ \mathcal{R}_{M} (\mathcal{F}) \leq  \min \left\{1, C \frac{L^{\frac{3}{2}}}{\sqrt{M}} \right\} $, where $ C $ is a positive constant.

\begin{remark}
  If the parameter domain is restricted so that $ R = \mathcal{O}(1) $\footnote{
    For example, if $ \theta \in [0, 2 \pi/\sqrt{L}]^{L} $, then the parameter domain's diameter is $ 2 \pi $, i.e., $ \mathcal{O}(1) $.
  }, the same argument gives $ \mathcal{R}_{M} (\mathcal{F}) = \mathcal{O}\left( \frac{L}{\sqrt{M}} \right) $.
\end{remark}
\begin{remark}
  For the linear model class $ \mathcal{F} \coloneqq \{ x \rightarrow w^{\mathsf{T}} x \; | \; \| x \|_{2} \leq D, \, \| w \|_{2} \leq B \} $, the Rademacher complexity $ \mathcal{R}_{M} (\mathcal{F}) $ is bounded by $ \mathcal{O}( \frac{DB}{\sqrt{M}} ) $.
  Let the size of $ w $ (the number of parameters) be $ p $.
  With respect to $ p $, if both $ B $ and $ D $ are $ \mathcal{O}(1) $, then $ \mathcal{R}_{M} (\mathcal{F}) = \mathcal{O}( \frac{1}{\sqrt{M}} ) $.
  If either $ B $ or $ D $ is $ \mathcal{O}(p) $, then $ \mathcal{R}_{M} (\mathcal{F}) = \mathcal{O}( \frac{p}{\sqrt{M}} ) $.
  If both $ B $ and $ D $ are $ \mathcal{O}(p) $, then $ \mathcal{R}_{M} (\mathcal{F}) = \mathcal{O}( \frac{p^{2}}{\sqrt{M}} ) $.
  Thus, when $ B = D = \mathcal{O}(p) $, the generalization bound for the linear model class may be worse than that of quantum circuits based on Pauli strings, assuming that $ p $ in the linear model corresponds to $ L $ in the quantum circuit.
\end{remark}

\section{Numerical experiments and results}\label{sec4}
In this section, we perform the qualitative consistency check of Theorem~\ref{th1} from the following three perspectives:
\begin{itemize}
\item Generalization gap,
\item Sample-size ($ M $) dependence,
\item Parameter-size ($ L $) dependence.
\end{itemize}
Here, the generalization gap is defined as the difference between the test loss $ L(f) $ and the empirical loss $ \hat{L}_{M}(f) $.
According to \cite{mohri2018}, for any $ \delta > 0 $, the generalization gap is bounded with probability at least $ 1 - \delta $ as follows:
\begin{equation}\label{eq42}
  L(f) - \hat{L}_{M}(f) \leq 2 \mathcal{R}_{M}(\mathcal{F}) + \sqrt{\frac{ \log{\frac{1}{\delta}} }{2M}}.
\end{equation}
Thus, the generalization gap is bounded by the Rademacher complexity.
To investigate the dependence on the sample size and parameter size, we use the definition of the empirical Rademacher complexity given in Definition~\ref{def1}.

Since the supremum over $ \Theta $ is computationally demanding, we approximate it by maximizing over a finite random subset $ \Theta_{rand} \subset \Theta $ as follows:
\begin{equation}\label{eq41}
  \mathcal{R}_{M} (\mathcal{F}) \approx \mathbb{E}_{S \sim \mathcal{D}^{M}} \mathbb{E}_{\sigma} \left[ \max_{\theta \in \Theta} \frac{1}{M} \sum_{j=1}^{M} \sigma_{j} f(x_{j} \, ; \theta) \right],
\end{equation}
where $ S = \{x_{j}\}_{j=1}^{M} $ and $ \mathcal{D}^{M} $ denotes an $ M $-dimensional data distribution.
The resulting value is a lower approximation of the dataset-averaged empirical Rademacher complexity.
In the numerical experiments, we resample $ S $ at each trial and report the average over trials.
Therefore, the reported quantity should be interpreted not as the empirical Rademacher complexity for a single fixed dataset, but as a Monte Carlo estimate of its expectation over randomly drawn datasets.
This procedure reduces dependence on a particular realization of the input sample.

To compute the generalization gap, we generated input--output training pairs as follows.
The output state of a quantum circuit, which served as a target model, was obtained as $ \ket{\phi} = \left( \prod_{j=1}^{L} \exp(-i \theta_{j} G_{j}) \right) \, RY(x)^{\otimes n} \, \ket{0}^{\otimes n} $.
The corresponding target value was obtained as $ \hat{y} = \braket{\phi | \, Z(0) \, | \phi} $.
Here, the number of qubits $ n $ was set to five, $ L $ was set to five, the input value $ x $ was uniformly sampled from $ [0, 2\pi] $, and $ \theta_{j} $ was uniformly sampled from $ [-0.01, 0.01] $.
The generator $ G_{j} $ was sampled from the set of 5-qubit Pauli strings, excluding the all-identity string.
We repeated this sampling and evaluation procedure $ M $ times to obtain a training dataset of size $ M $.

For the experiments on sample-size and parameter-size dependence, Algorithm~\ref{alg1} gives a random-search procedure for computing a lower approximation of the empirical Rademacher complexity\footnote{
  Quantum circuits were implemented in PennyLane \cite{bergholm2018}.
  Noiseless and state-vector simulations were performed.
}.
Here, in this algorithm, we included the sampling of $ x $, which aims to reduce the selection bias of $ x $.
However, the volume of the parameter space grows exponentially with $ L $.
Thus, Algorithm~\ref{alg1} may be effective only for relatively small $ L $.
\begin{algorithm}
\caption{Random-search approximation used in the numerical experiments}\label{alg1}
\begin{algorithmic}[1]
\Require $ N_{trials} $, $ N_{data} $, $ N_{\theta} $, $ L $, $ {\rm Scale} $, and $ f(x \, ; \theta) $.
\State $ G \leftarrow \emptyset $.
\For{$ t = 1, \ldots, N_{trials} $}
\State $ x_{j} \sim \mathrm{Uniform}(0, 2 \pi) $ for $ 1 \leq j \leq N_{data} $.
\State $ \sigma_{j} \sim \{-1, 1\} $ with probability 0.5 for each value, for $ 1 \leq j \leq N_{data} $.
\State $ S \leftarrow \emptyset $.
\For{$ k = 1, \ldots, N_{\theta} $}
\State $ \theta_{j} \sim \mathrm{Uniform}(0, 2 \pi \cdot {\rm Scale}) $ for $ 1 \leq j \leq L $.
\State Compute $ s = \frac{1}{N_{data}} \sum_{j=1}^{N_{data}} \sigma_{j} f(x_{j} \, ; \theta) $.
\State $ S \leftarrow S \cup \{ s \} $.
\EndFor
\State Find the maximum value in $ S $ and add it to $ G $.
\EndFor
\State Compute the average $ g_{ave} $ and standard error $ g_{err} $ of the elements in $ G $.
\end{algorithmic}
\end{algorithm}

Figure~\ref{fig4-1} shows the generalization-gap results obtained using Eq.~\ref{eq41}.
For training, the number of epochs was set to 5000, and training was terminated when the root mean square error (RMSE) became smaller than $ 1 \times 10^{-4} $.
The optimizer was the simultaneous perturbation stochastic approximation (SPSA) \cite{gacon2021}.
The learning rate was 0.01, and the momentum term was 0.5.
The initial value of $ \theta $ was uniformly sampled from $ [-1, 1] $.
The test sample size was 1000.
The generator was also sampled from the set of 5-qubit Pauli strings, excluding the all-identity string.

For $ L \in \{4, 5, 8\} $, the number of samples $ M $ was set to 25 and 100.
Training was performed 20 times with different random seeds.
\begin{figure}[htbp]
  \centering
  \begin{tabular}{c}
    \includegraphics[width=10cm]{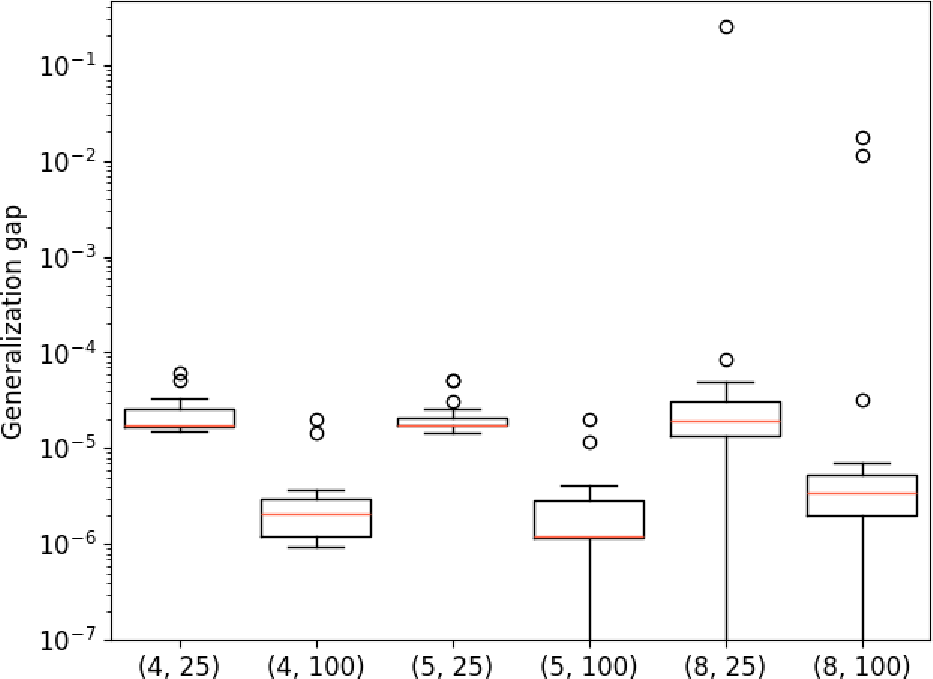}
  \end{tabular}
  \caption{Box plots of the generalization gaps for pairs $ (L, M) $ of parameter size $ L $ and sample size $ M $}\label{fig4-1}
\end{figure}
For each value of $ L $, larger sample sizes reduced the generalization gap.
Regardless of the value of $ L $, the median gaps were approximately the same.
For the larger value of $ L $ ($ L = 8 $), the variance became somewhat larger.
This may be because the training data size $ M $ was relatively small compared with the parameter size $ L $.
In addition, the gaps for $ M = 100 $ were not reduced by a factor of two relative to those for $ M = 25 $, which differs from the theoretical outcome suggested by Eq.~\ref{eq42} and Theorem~\ref{th1}.
This discrepancy may be due to the convergence behavior during training, which depends on the optimizer.

Next, we report the mean values and standard errors for the sample-size dependence of the Rademacher complexity in Eq.~\ref{eq41} in Figure~\ref{fig4-2}.
Here, $ L = 5 $ and $ M \in \{10, 100, 200, 400, 800\} $.
In Algorithm~\ref{alg1}, the hyperparameters were set as follows: $ N_{trials} = 500 $, $ N_{data} \in \{10, 100, 200, 400, 800 \}$, $ N_{\theta} = 100000 $, and $ {\rm Scale} = 1.0 $.
\begin{figure}[htbp]
  \centering
  \begin{tabular}{c}
    \includegraphics[width=10cm]{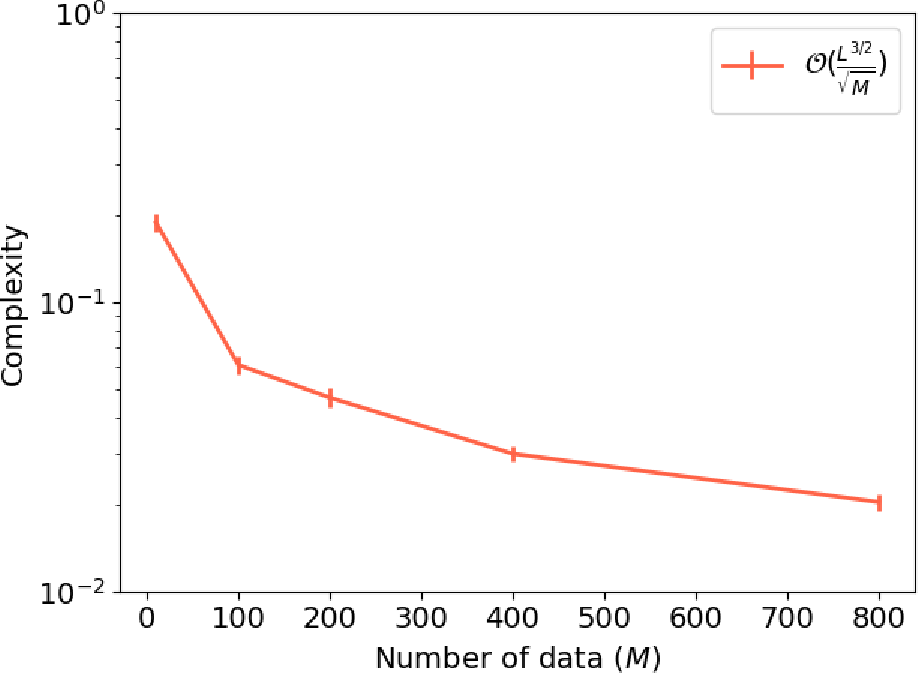}
  \end{tabular}
  \caption{Results for the dependence on sample size $ M $. The y-axis denotes approximated empirical Rademacher complexity.}\label{fig4-2}
\end{figure}
We took logarithms of the approximated Rademacher complexity and $ M $ and performed linear regression analysis.
We obtained a slope of -0.49782 (the p-value $ < 0.01 $).
Therefore, the results are consistent with the $ \frac{1}{\sqrt{M}} $ scaling predicted by Theorem~\ref{th1}.

Finally, Figure~\ref{fig4-3} shows the average values and standard errors for the parameter-size dependence.
In Algorithm~\ref{alg1}, the hyperparameters were set as follows: $ N_{trials} = 500 $, $ N_{data} = 200 $, $ N_{\theta} = 50000L $, and $ {\rm Scale} = 1.0 $ for $ \mathcal{O}(\frac{L^{\frac{3}{2}}}{\sqrt{M}}) $ and $ {\rm Scale} = \frac{1}{2\pi \sqrt{L}} $ for $ \mathcal{O}(\frac{L}{\sqrt{M}}) $.
Here, we assume that $ N_{\theta} $ scales proportionally with $ L $, since increasing $ L $ enlarges the parameter space.
Appendix \ref{ap1} shows the convergence of the approximated complexity obtained by Algorithm~\ref{alg1}.
\begin{figure}[htbp]
  \centering
  \begin{tabular}{c}
    \includegraphics[width=10cm]{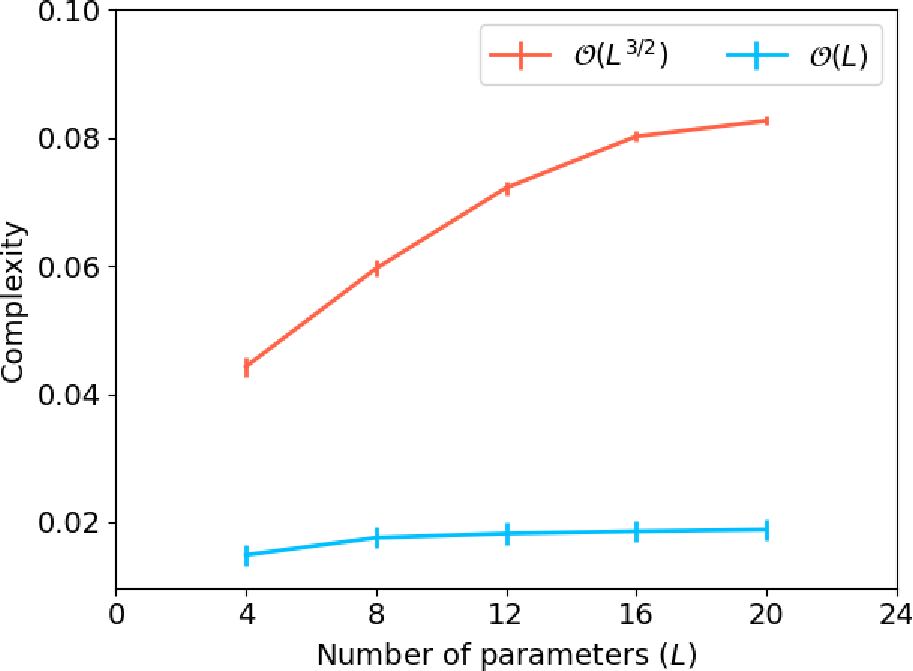}
  \end{tabular}
  \caption{Results for the dependence on parameter size $ L $. The y-axis denotes approximated empirical Rademacher complexity.}\label{fig4-3}
\end{figure}
As shown in the figure, increasing the parameter size generally increased the complexity in both cases.
The complexity for $ \mathcal{O}(\frac{L}{\sqrt{M}}) $ was substantially lower than that for $ \mathcal{O}(\frac{L^{\frac{3}{2}}}{\sqrt{M}}) $, which is qualitatively consistent with the theoretical prediction.
We took logarithms of the approximated Rademacher complexities for $ \mathcal{O}(\frac{L^{\frac{3}{2}}}{\sqrt{M}}) $ and $ \mathcal{O}(\frac{L}{\sqrt{M}}) $ and of $ L $, and performed linear regression analysis for each case.
The slope was 0.40372 (the p-value $ < 0.01 $) for $ \mathcal{O}(\frac{L^{\frac{3}{2}}}{\sqrt{M}}) $ and 0.14498 (the p-value = 0.02547) for $ \mathcal{O}(\frac{L}{\sqrt{M}}) $.
Each value was lower than 1.5 or 1, respectively.
These results are also broadly consistent with the theoretically predicted trend based on $ L^{\frac{3}{2}} $ and $ L $, but do not quantitatively recover the theoretical exponents.
In addition, the values of $ \mathcal{O}(\frac{L^{\frac{3}{2}}}{\sqrt{M}}) $ and $ \mathcal{O}(\frac{L}{\sqrt{M}}) $ exhibited gradually diminishing growth rates (Appendix \ref{ap1}).
This may be due to the insufficient number of $ N_{\theta} $ as the parameter space increases.

\section{Conclusion}\label{sec5}
We analyzed the Rademacher complexity $ \mathcal{R}_{M} $ of a parameterized unitary whose generators are chosen from $ n $-qubit Pauli strings.
The Rademacher complexity of a parameterized unitary generated by Pauli strings is scaled as $ L^{\frac{3}{2}} $; furthermore, when the parameter space is restricted, the order improves to $ L $.
In addition, our results suggest that the resulting complexity has potentially more favorable scaling than a classical linear model class under certain norm-scaling assumptions.
We conducted numerical experiments and obtained results consistent with the theoretical predictions.
However, in Algorithm~\ref{alg1}, the number of $ \theta $ samples ($ N_{\theta} $) may be somewhat small for computing a lower approximation of the empirical Rademacher complexity.
In future work, we will compute the complexity by adopting an optimization algorithm to maximize over $ \theta $ in Eq. \ref{eq41}, rather than relying on the random-search procedure.

\appendix
\section{Convergence of approximated complexity}\label{ap1}
As an example, Figure \ref{figa-1} shows the approximated complexity obtained by Algorithm~\ref{alg1} with varying $ N_{trials} $ for $ L \in \{4, 8, 12, 16, 20\} $, $ N_{data} = 200 $, $ N_{\theta} = 50000 L $, and $ {\rm Scale} = 1.0 $.
\begin{figure}[htbp]
  \centering
  \begin{tabular}{c}
    \includegraphics[width=10cm]{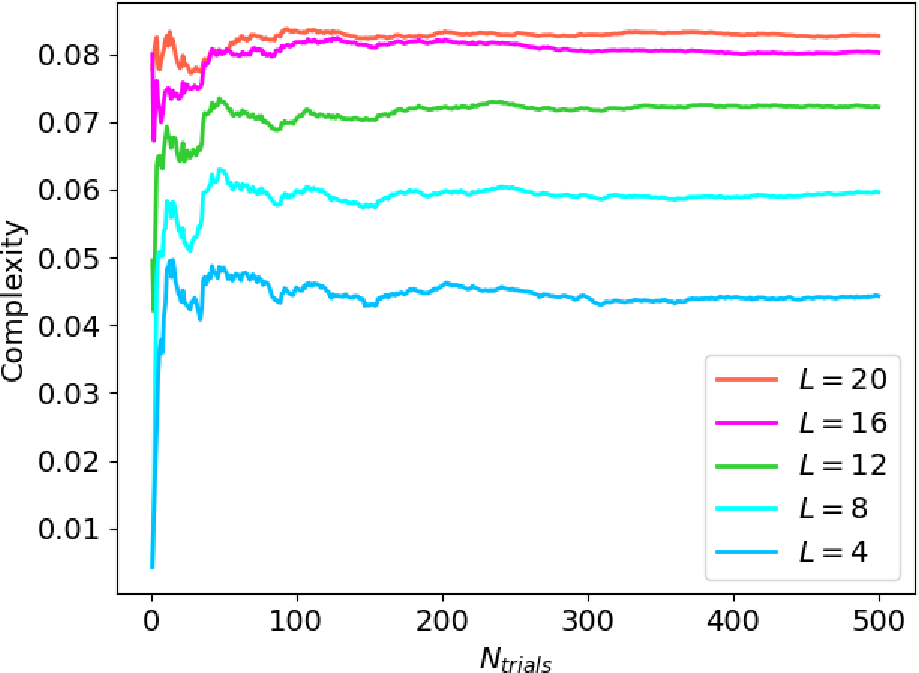}
  \end{tabular}
  \caption{Results for the approximated complexity as a function of $ N_{trials} $}\label{figa-1}
\end{figure}
The figure shows that, with respect to $ N_{trials} $, the approximated complexities converge to flat behavior for each $ L $.
As $ L $ increases, the intervals of converged complexities become narrower.
This may indicate that $ N_{\theta} $ is insufficient relative to the size of the parameter space.

\bibliographystyle{plain}
\bibliography{references}

\end{document}